\DeclareMathOperator*{\argmax}{arg\,max}
\DeclareMathOperator*{\argmin}{arg\,min}
\newtheorem{thm}{Theorem}
\newtheorem{lem}[thm]{Lemma}
\pgfplotsset{compat=newest} 
\pgfplotsset{plot coordinates/math parser=false}
\newacronym{3gpp}{3GPP}{3rd Generation Partnership Project}
\newacronym{adc}{ADC}{Analog to Digital Converter}
\newacronym{5g}{5G}{5th generation}
\newacronym{6g}{6G}{6th generation}
\newacronym{ai}{AI}{Artificial Intelligence}
\newacronym{aimd}{AIMD}{Additive Increase Multiplicative Decrease}
\newacronym{am}{AM}{Acknowledged Mode}
\newacronym{amc}{AMC}{Adaptive Modulation and Coding}
\newacronym{aqm}{AQM}{Active Queue Management}
\newacronym{awgn}{AGWN}{Additive White Gaussian Noise}
\newacronym{balia}{BALIA}{Balanced Link Adaptation}
\newacronym{bdp}{BDP}{Bandwidth-Delay Product}
\newacronym{bf}{BF}{beamforming}
\newacronym{cc}{CC}{Congestion Control}
\newacronym{cdf}{CDF}{Cumulative Distribution Function}
\newacronym{cn}{CN}{Core Network}
\newacronym{cqi}{CQI}{Channel Quality Information}
\newacronym{cp}{CP}{Control Plane}
\newacronym{csirs}{CSI-RS}{Channel State Information - Reference Signal}
\newacronym{dc}{DC}{Dual Connectivity}
\newacronym{rb}{RB}{Resource Block}
\newacronym{dce}{DCE}{Direct Code Execution}
\newacronym{dci}{DCI}{Downlink Control Information}
\newacronym{udp}{UDP}{User Datagram Protocol}
\newacronym{dl}{DL}{downlink}
\newacronym{fcfs}{FCFS}{first-come-first-served}
\newacronym{dmr}{DMR}{Deadline Miss Ratio}
\newacronym{fspl}{FSPL}{free-space path loss}
\newacronym{dmrs}{DMRS}{DeModulation Reference Signal}
\newacronym{e2e}{E2E}{End-to-End}
\newacronym{ppp}{PPP}{Poission Point Process}
\newacronym{aoi}{AoI}{Area of Interest}
\newacronym{cpu}{CPU}{Central Processing Unit}
 \newacronym{gpu}{GPU}{Graphics Processing Unit}
 \newacronym{tpu}{TPU}{Tensor Processing Unit}
\newacronym{si}{SI}{Study Item}
\newacronym{ecn}{ECN}{Explicit Congestion Notification}
\newacronym{edf}{EDF}{Earliest Deadline First}
\newacronym{enb}{eNB}{eNodeB}
\newacronym{epc}{EPC}{Evolved Packet Core}
\newacronym{es}{ES}{Edge Server}
\newacronym{cav}{CAV}{Connected and Autonomous Vehicle}
\newacronym{fdma}{FDMA}{Frequency Division Multiple Access}
\newacronym{fdd}{FDD}{Frequency Division Duplexing}
\newacronym{upa}{UPA}{Uniform Planar Array}
\newacronym{car}{CAR}{Circular Aperture Reflector }
\newacronym[firstplural=Radio Access Technologies (RATs)]{rat}{RAT}{Radio Access Technology}
\newacronym[firstplural=Radio Access Technology (RTs)]{rt}{RT}{Radio Technology}
\newacronym{fs}{FS}{Fast Switching}
\newacronym{isd}{ISD}{inter-site distance}
\newacronym{ftp}{FTP}{File Transfer Protocol}
\newacronym{gnb}{gNB}{Next Generation Node Base}
\newacronym{harq}{HARQ}{Hybrid Automatic Repeat reQuest}
\newacronym{hetnet}{HetNet}{Heterogeneous Network}
\newacronym{hh}{HH}{Hard Handover}
\newacronym{hol}{HOL}{Head-of-Line}
\newacronym{ia}{IA}{Initial Access}
\newacronym{imt}{IMT}{International Mobile Telecommunication}
\newacronym{iot}{IoT}{Internet of Things}
\newacronym{los}{LOS}{Line of Sight}
\newacronym{lte}{LTE}{Long Term Evolution}
\newacronym{m2m}{M2M}{Machine to Machine}
\newacronym{mac}{MAC}{Medium Access Control}
\newacronym{mc}{MC}{Multi-Connectivity}
\newacronym{mcs}{MCS}{Modulation and Coding Scheme}
\newacronym{mec}{MEC}{Mobile Edge Cloud}
\newacronym{mi}{MI}{Mutual Information}
\newacronym{mimo}{MIMO}{Multiple Input Multiple Output}
\newacronym{mmwave}{mmWave}{millimeter wave}
\newacronym{mptcp}{MPTCP}{Multipath TCP}
\newacronym{mr}{MR}{Maximum Rate}
\newacronym{mss}{MSS}{Maximum Segment Size}
\newacronym{mtd}{MTD}{Machine-Type Device}
\newacronym{mtu}{MTU}{Maximum Transmission Unit}
\newacronym{nfv}{NFV}{Network Function Virtualization}
\newacronym{vnf}{VNF}{Virtualization Network Function}
\newacronym{gv}{GV}{ground vehicle}
\newacronym{gvs}{GVs}{ground vehicles}
\newacronym{vec}{VEC}{Vehicular Edge Computing}
\newacronym{sdn}{SDN}{Software Defined Networking}
\newacronym{nlos}{NLOS}{Non Line of Sight}
\newacronym{nlosb}{NLOSb}{Building Non Line of Sight}
\newacronym{nlosv}{NLOSv}{Vehicle Non Line of Sight}
\newacronym{nr}{NR}{New Radio}
\newacronym{ofdm}{OFDM}{Orthogonal Frequency Division Multiplexing}
\newacronym{pdcch}{PDCCH}{Physical Downlink Control Channel}
\newacronym{pdcp}{PDCP}{Packet Data Convergence Protocol}
\newacronym{pdsch}{PDSCH}{Physical Downlink Shared Channel}
\newacronym{pdu}{PDU}{Packet Data Unit}
\newacronym{pf}{PF}{Proportional Fair}
\newacronym{pgw}{PGW}{Packet Gateway}
\newacronym{phy}{PHY}{Physical}
\newacronym{pbch}{PBCH}{Physical Broadcast Channel}
\newacronym[plural=\gls{mme}s,firstplural=Mobility Management Entities (MMEs)]{mme}{MME}{Mobility Management Entity}
\newacronym{prb}{PRB}{Physical Resource Block}
\newacronym{pss}{PSS}{Primary Synchronization Signal}
\newacronym{pucch}{PUCCH}{Physical Uplink Control Channel}
\newacronym{pusch}{PUSCH}{Physical Uplink Shared Channel}
\newacronym{rach}{RACH}{Random Access Channel}
\newacronym{ran}{RAN}{Radio Access Network}
\newacronym{red}{RED}{Random Early Detection}
\newacronym{rf}{RF}{Radio Frequency}
\newacronym{rlc}{RLC}{Radio Link Control}
\newacronym{rlf}{RLF}{Radio Link Failure}
\newacronym{rrc}{RRC}{Radio Resource Control}
\newacronym{rrm}{RRM}{Radio Resource Management}
\newacronym{rr}{RR}{Round Robin}
\newacronym{rs}{RS}{Remote Server}
\newacronym{rsrp}{RSRP}{Reference Signal Received Power}
\newacronym{rss}{RSS}{Received Signal Strength}
\newacronym{rtt}{RTT}{Round Trip Time}
\newacronym{rw}{RW}{Receive Window}
\newacronym{rx}{RX}{Receiver}
\newacronym{sa}{SA}{standalone}
\newacronym{sack}{SACK}{Selective Acknowledgment}
\newacronym{sap}{SAP}{Service Access Point}
\newacronym{sch}{SCH}{Secondary Cell Handover}
\newacronym{scoot}{SCOOT}{Split Cycle Offset Optimization Technique}
\newacronym{sdma}{SDMA}{Spatial Division Multiple Access}
\newacronym{sinr}{SINR}{Signal to Interference plus Noise Ratio}
\newacronym{sm}{SM}{Saturation Mode}
\newacronym{snr}{SNR}{Signal-to-Noise Ratio}
\newacronym{son}{SON}{Self-Organizing Network}
\newacronym{ss}{SS}{Synchronization Signal}
\newacronym{srs}{SRS}{Sounding Reference Signal}
\newacronym{sss}{SSS}{Secondary Synchronization Signal}
\newacronym{tb}{TB}{Transport Block}
\newacronym{tcp}{TCP}{Transmission Control Protocol}
\newacronym{tdd}{TDD}{Time Division Duplexing}
\newacronym{tdma}{TDMA}{Time Division Multiple Access}
\newacronym{tfl}{TfL}{Transport for London}
\newacronym{tm}{TM}{Transparent Mode}
\newacronym{prr}{PRR}{Packet Reception Ratio}
\newacronym{trp}{TRP}{Transmitter Receiver Pair}
\newacronym{tti}{TTI}{Transmission Time Interval}
\newacronym{ttt}{TTT}{Time-to-Trigger}
\newacronym{tx}{TX}{Transmitter}
\newacronym{ue}{UE}{User Equipment}
\newacronym{ul}{UL}{uplink}
\newacronym{uml}{UML}{Unified Modeling Language}
\newacronym{um}{UM}{Unacknowledged Mode}
\newacronym{utc}{UTC}{Urban Traffic Control}
\newacronym{vm}{VM}{Virtual Machine}
\newacronym{rsrq}{RSRQ}{Reference Signal Received Quality}
\newacronym{rssi}{RSSI}{Received Signal Strength Indicator}
\newacronym{crs}{CRS}{Cell Reference Signal}
\newacronym{v2v}{V2V}{Vehicle-to-Vehicle}
\newacronym{v2i}{V2I}{Vehicle-to-Infrastructure}
\newacronym{v2n}{V2N}{Vehicle-to-Network}
\newacronym{v2x}{V2X}{Vehicle-to-Everything}
\newacronym{vn}{VN}{Vehicular Node}
\newacronym{dsrc}{DSRC}{Dedicated Short Range Communication}
\newacronym{ci}{CI}{context information}
\newacronym{voi}{VoI}{value of information}
\newacronym{gps}{GPS}{Global Positioning System}
\newacronym{qos}{QoS}{Quality of Service}
\newacronym{qoe}{QoE}{Quality of Experience}
\newacronym{ml}{ML}{Machine Learning}
\newacronym{ahp}{AHP}{Analytic Hierarchy Process}
\newacronym{lidar}{LIDAR}{Light Detection and Ranging}
\newacronym{sumo}{SUMO}{Simulation of Urban MObility}
\newacronym{wave}{WAVE}{Wireless Access in Vehicular Environment}
\newacronym{c-its}{C-ITS}{Connected Intelligent Transportation System}
\newacronym{dash}{DASH}{Dynamic Adaptive Streaming over HTTP}
\newacronym{http}{HTTP}{HyperText Transfer Protocol}
\newacronym{nt}{NT}{Non-Terrestrial}
\newacronym{ntc}{NTC}{non-terrestrial communication}
\newacronym{ntn}{NTN}{non-terrestrial network}
\newacronym{tn}{TN}{terrestrial network}
\newacronym{hap}{HAP}{High Altitude Platform}
\newacronym{leo}{LEO}{Low Earth Orbit}
\newacronym{meo}{MEO}{Medium Earth Orbit}
\newacronym{geo}{GEO}{Geostationary Earth Orbit}
\newacronym{uav}{UAV}{Unmanned Aerial Vehicle}
\newacronym{nsat}{nSAT}{Nanosatellite}
\newacronym{ehf}{EHF}{extremely high-frequency}
\newacronym{ioe}{IoE}{Internet of Everyone}
\newacronym{gan}{GaN}{Gallium Nitride}
\newacronym{tle}{TLE}{two-line element}
\newacronym{ecdf}{ECDF}{Empirical Cumulative Distribution Function}
\newacronym{fifo}{FIFO}{First-Input First-Output}
\newacronym{gnss}{GNSS}{Global Navigation Satellite System}
\newacronym{essa}{ESSA}{Enhanced Synchronized Slot Allocation}
\newacronym{ta}{TA}{Timing Advance}
\pgfplotsset{compat=newest}
\pgfplotsset{plot coordinates/math parser=false}
\pgfplotsset{every axis/.append style={
                    label style={font=\scriptsize},
                    tick label style={font=\scriptsize},
                    legend style={font=\scriptsize}
                    }}
\tikzstyle{startstop} = [rectangle, rounded corners, minimum width=2cm, minimum height=0.5cm,text centered, draw=black]
\tikzstyle{io} = [trapezium, trapezium left angle=70, trapezium right angle=110, minimum width=3cm, minimum height=1cm, text centered, draw=black]
\tikzstyle{process} = [rectangle, minimum width=2cm, minimum height=0.5cm, text centered, draw=black, alignb=center]
\tikzstyle{decision} = [ellipse, minimum width=2cm, minimum height=1cm, text centered, draw=black]
\tikzstyle{arrow} = [thick,<->,>=stealth]
\tikzstyle{line} = [thick,>=stealth]
\tikzstyle{darrow} = [thick,<->,>=stealth,dashed]
\tikzstyle{sarrow} = [thick,->,>=stealth]
\tikzstyle{larrow} = [line width=0.1mm,dashdotted,->,>=stealth]
\def\grd@save@target#1{%
  \def\grd@target{#1}}
\def\grd@save@start#1{%
  \def\grd@start{#1}}
\tikzset{
  grid with coordinates/.style={
    to path={%
      \pgfextra{%
        \edef\grd@@target{(\tikztotarget)}%
        \tikz@scan@one@point\grd@save@target\grd@@target\relax
        \edef\grd@@start{(\tikztostart)}%
        \tikz@scan@one@point\grd@save@start\grd@@start\relax
        \draw[minor help lines] (\tikztostart) grid (\tikztotarget);
        \draw[major help lines] (\tikztostart) grid (\tikztotarget);
        \grd@start
        \pgfmathsetmacro{\grd@xa}{\the\pgf@x/1cm}
        \pgfmathsetmacro{\grd@ya}{\the\pgf@y/1cm}
        \grd@target
        \pgfmathsetmacro{\grd@xb}{\the\pgf@x/1cm}
        \pgfmathsetmacro{\grd@yb}{\the\pgf@y/1cm}
        \pgfmathsetmacro{\grd@xc}{\grd@xa + \pgfkeysvalueof{/tikz/grid with coordinates/major step x}}
        \pgfmathsetmacro{\grd@yc}{\grd@ya + \pgfkeysvalueof{/tikz/grid with coordinates/major step y}}
        \foreach \x in {\grd@xa,\grd@xc,...,\grd@xb}
        \node[anchor=north] at (\x,\grd@ya) {\pgfmathprintnumber{\x}};
        \foreach \y in {\grd@ya,\grd@yc,...,\grd@yb}
        \node[anchor=east] at (\grd@xa,\y) {\pgfmathprintnumber{\y}};
      }
    }
  },
  minor help lines/.style={
    help lines,
    gray,
    line cap =round,
    xstep=\pgfkeysvalueof{/tikz/grid with coordinates/minor step x},
    ystep=\pgfkeysvalueof{/tikz/grid with coordinates/minor step y}
  },
  major help lines/.style={
    help lines,
    line cap =round,
    line width=\pgfkeysvalueof{/tikz/grid with coordinates/major line width},
    xstep=\pgfkeysvalueof{/tikz/grid with coordinates/major step x},
    ystep=\pgfkeysvalueof{/tikz/grid with coordinates/major step y}
  },
  grid with coordinates/.cd,
  minor step x/.initial=.5,
  minor step y/.initial=.2,
  major step x/.initial=1,
  major step y/.initial=1,
  major line width/.initial=1pt,
}
\newlength\fheight
\newlength\fwidth
\definecolor{steelblue}{RGB}{176,196,222}
\crefname{section}{Sec.}{Secs.}
\newcommand\copyrightnotice{%
\begin{tikzpicture}[remember picture,overlay]
\node[anchor=south,yshift=15pt] at (current page.south) {\fbox{\parbox{\dimexpr\textwidth-\fboxsep-\fboxrule\relax}{
\footnotesize \textcopyright 2024 IEEE. Personal use of this material is permitted.
Permission from IEEE must be obtained for all other uses, in any current or future media,
including reprinting/republishing this material for advertising or promotional purposes,
creating new collective works, for resale or redistribution to servers or lists,
or reuse of any copyrighted component of this work in other works.}}};
\end{tikzpicture}
}
\begin{document}
\bstctlcite{IEEEexample:BSTcontrol}

\title{Enhanced Time Division Duplexing Slot Allocation and Scheduling in Non-Terrestrial Networks}

    \author{\IEEEauthorblockN{Alessandro Traspadini, Marco Giordani, Michele Zorzi \medskip}

\IEEEauthorblockA{ Department of Information Engineering, University of Padova, Italy.\\
Email:	\texttt{\{traspadini, giordani, zorzi\}@dei.unipd.it}}}

\maketitle

\copyrightnotice

\begin{abstract}
The integration of \glspl{ntn} and \glspl{tn} is fundamental for extending connectivity to rural and underserved areas that lack coverage from traditional cellular infrastructure.
However, this integration presents several challenges.  
For instance, \glspl{tn} mainly operate in \gls{tdd}. However, for NTN via satellites, \gls{tdd} is complicated due to synchronization problems in large cells, and the significant impact of guard periods and long propagation delays.
In this paper, we propose a novel slot allocation mechanism to enable \gls{tdd} in \gls{ntn}.
This approach permits to allocate additional transmissions during the guard period between a downlink slot and the corresponding uplink slot to reduce the overhead, provided that they do not interfere with other concurrent transmissions.
Moreover, we propose two scheduling methods to select the users that transmit based on considerations related to the \gls{snr} or the propagation delay.
Simulations demonstrate that our proposal can increase the network capacity compared to a benchmark scheme that does not schedule transmissions in guard~periods.
\end{abstract}

\glsresetall

\begin{IEEEkeywords}
\Glspl{ntn}; Satellite communication; \gls{tdd}; Scheduling.
\end{IEEEkeywords}

\begin{tikzpicture}[remember picture,overlay]
\node[anchor=north,yshift=-10pt] at (current page.north) {\parbox{\dimexpr\textwidth-\fboxsep-\fboxrule\relax}{
\centering\footnotesize This paper has been accepted for publication at the 58th Asilomar Conference on Signals, Systems, and Computers. \textcopyright 2024 IEEE.\\
Please cite it as: A. Traspadini, M. Giordani, and M. Zorzi, "Enhanced Time Division Duplexing Slot Allocation and Scheduling in Non-Terrestrial Networks," 58th Asilomar Conference on Signals, Systems, and Computers, 2024.}};
\end{tikzpicture}

\glsresetall

\section{Introduction}
\label{sec:intro}
The research community is studying \glspl{ntn} as an approach to extend ground coverage in extreme environments, such as in remote or unconnected areas~\cite{Chaoub20216g}, and provide connectivity in case of emergency~\cite{Emerging24Shahid}.
Notably, the integration and coordination between \glspl{ntn} and \glspl{tn} is particularly promising to improve service continuity, availability, and scalability~\cite{Toward22Araniti}.

In this context, the \gls{3gpp} has consolidated the possible use of satellites, especially \gls{leo} satellites, into the 5G \gls{nr} cellular standard since Release 15~\cite{38811,38821}, and a new  Work Item on 3GPP NTN has been officially approved in Release 17~\cite{3GPP_WINTN}. Direct-to-handset satellite communication for unmodified smartphones is also being promoted. 
However, to date, the 3GPP NTN standard has not yet been fully formalized, 
also with respect to the baseline 5G NR terrestrial protocol stack.
The key challenges include the longer propagation delay and more severe propagation loss of the satellite channel with respect to \glspl{tn}, and the large Doppler shifts and velocity of satellite platforms~\cite{giordani2021non,IntroElJaafari23}. 
Several studies have proposed solutions to address these problems.
For instance, Han \emph{et al.} proposed to relocate the functions of mobility management to satellites to improve flexibility and reduce delays~\cite{Novel21Han}.
Similarly, Peng~\emph{et al.} investigated interference mitigation techniques to facilitate the integration of multi-beam satellite  systems with \glspl{tn}~\cite{Integ22Peng}.
The analysis on co-channel interference in integrated TN/NTN systems was further discussed in~\cite{Reverse21Lee,Interference24Lee}.
Additionally, Zu \emph{et al.} presented a cooperative transmission design for TN/NTN systems with the objective of improving coverage and capacity~\cite{Cooperative19Zhu}.

To date, a major compatibility issue between TNs and NTNs is related to the duplex communication. In fact, most satellite networks are designed in \gls{fdd}~\cite{3GPP_38863}. 
In \gls{fdd}, \glspl{ue} and satellites operate on different frequency bands for \gls{ul} and \gls{dl} slots. This approach permits full-duplex communication via simultaneous and continuous \gls{ul} and \gls{dl} transmissions, eliminates the need for tight synchronization, and promotes interference mitigation as \glspl{ue} communicate in orthogonal frequency bands.
In turn, terrestrial 5G NR networks operate in \gls{tdd}~\cite{38300}. The advantages include channel reciprocity, dynamic and flexible traffic allocation, lower hardware costs, and frequency diversity~\cite{EvolutionChan06}.
However, to guarantee full and seamleass integration between TNs and NTNs, the 3GPP suggests that NTNs be designed also in TDD~\cite{38811}, which may raise several concerns.
First, the length of guard periods, which ensure that consecutive \gls{ul} and \gls{dl} transmissions do not interfere, must be proportional to the cell size and propagation delay. Both are particularly large in satellite networks, and may span several transmission slots, with negative implications in terms of overhead.
Second, TDD requires synchronization, which is difficult in satellite networks due to differential delays that could be experienced by two or more \glspl{ue} within the same cell, especially at low elevation.
In 3GPP, \gls{ta} was introduced to compensate for the propagation delays between UEs and the satellite to achieve synchronization. Still, this approach prevents the network from scheduling TDD transmissions during guard periods to avoid interference, which results in large overheads and inefficient use of resources.

To address these issues, in this paper we propose a novel slot allocation mechanism, called \gls{essa}, to improve capacity in \gls{tdd} \gls{ntn}. 
Specifically, the idea is to reduce the number of guard periods, where transmissions may not be formally scheduled, by allocating multiple (rather than just one) \gls{dl} slots before a \gls{ul} slot, if these transmissions do not generate interference at the satellite.  
Moreover, we propose two scheduling methods to select the optimal UEs that should transmit in TDD to maximize the network capacity. Scheduling is based on either the best link quality (MG) or the minimum differential delay (MS).
We evaluate the performance of ESSA, as well as of MG and MS, against a TDD benchmark scheme that does not schedule additional transmissions during guard periods, but only implements \gls{ta} for synchronization. We demonstrate via simulations that ESSA can increase the channel usage by nearly 50\%, and that the capacity of MS combined with ESSA is around three times higher than that of MG with TA.

The remainder of this paper is structured as follows.
In \cref{sec:system_model} we present the system model, in~\cref{sec:ssa} we introduce ESSA and the proposed scheduling methods, in~\cref{sec:simulation_results} we describe our simulation results, and in~\cref{sec:conclusions} we conclude the paper with suggestions for future work.

\begin{figure}[t!]
    \centering
	\includegraphics[width=0.99\columnwidth]{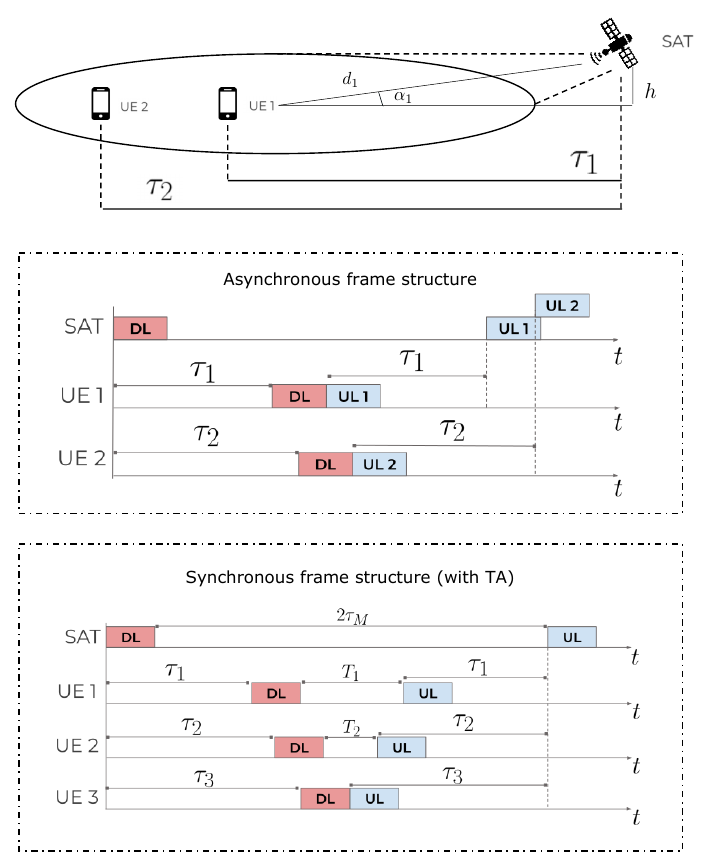}
 \caption{ Illustration of the scenario for $N_{\rm UE}=2$ (top), and the TDD frame structure with (middle) and without (bottom) timing advance.\vspace{-0.5cm}}
 \label{fig:fig1}
 \end{figure}

\section{System Model}
\label{sec:system_model}
Our NTN scenario, depicted in~\cref{fig:fig1} (top), consists of $N_{\rm UE}$ \glspl{ue} that are connected through direct-to-satellite links to a single \gls{leo} satellite (SAT) providing \gls{gnb} functionalities at an altitude $h$.
The system is assumed to be steady, with fixed propagation delays for both \gls{ul} and \gls{dl} transmissions, as the mobility of the satellite on its orbit and that of the \glspl{ue} can be neglected at the subframe level. 
As described in~\cite{Energy23Traspadini}, the distance $d_i$ from the satellite to a generic \gls{ue} $i$ is given by
\begin{equation}
d_{i} = \sqrt{R^2_E \sin^2(\alpha_i) + h^2 + 2 h R_E} - R_E \sin(\alpha_i),
\label{eq:distance}
\end{equation}
where $R_E$ is the Earth's radius, and $\alpha_i$ is the elevation angle between \gls{ue} $i$ and the satellite.
In the following, we describe the channel model (\cref{sub:channel}) and the TDD frame structure (\cref{sub:frame}).

\subsection{Channel Model}
\label{sub:channel}
In this study we consider the \gls{3gpp} \gls{ntn} channel model with shadowing in an urban scenario~\cite{38811}.\footnote{A complete characterization of the \gls{3gpp} \gls{ntn} channel model in~\cite{38811} is described in~\cite{sandri2023implementation}, and implemented in ns-3 here: \url{https://gitlab.com/mattiasandri/ns-3-ntn/-/tree/ntn-dev}.}
The path loss between the satellite and \gls{ue} $i$ is given by:
\begin{equation}
PL_{i} = FPL_{i} + A_{g} + A_{s} + SF
\end{equation}
where $FPL$ is the \gls{fspl}, $ A_{g}$ is the atmospheric absorption loss due to dry air and water vapor attenuation, and $A_{s}$ is the scintillation loss due to sudden changes in the refractive index caused by variations of the temperature, water vapor content, and
barometric pressure~\cite{wang2020potential}.
$SF$ represents the shadowing component, which is modeled as a zero mean log-normal random variable with variance $\sigma_s$.
The value of $\sigma_s$ depends on various factors including channel conditions, the scenario, and the elevation angle, as described in~\cite{38811}.
The \gls{fspl} is given by:
\begin{equation}
FPL_{i} = 92.45 + 20\log_{10}(f) + 20\log_{10}(d_i),
\end{equation}
where $f$ represents the carrier frequency in~GHz and $d_i$ is the distance in kilometers.
The received power $P^{rx}_{i}$ of \gls{ue} $i$ is expressed as
\begin{equation}
    P^{rx}_{i} = P^{tx} + G^{tx} + G^{rx} - PL_{i}, 
\end{equation}
where $P^{tx}$ is the transmitted power, $G^{tx}$ is the transmitter antenna gain, and $G^{rx}$ is the receiver antenna gain.
Thus, the \gls{snr} for \gls{ue} $i$ is given by
\begin{equation}
    \gamma_{i} =  P^{rx}_{i} - 10\log_{10}(kTB) - N_f,
\end{equation}
where $k$ is the Boltzmann constant, $T$ is the antenna noise temperature, $B$ is the bandwidth, and $N_f$ is the noise figure.
From the \gls{snr}, the Ergodic capacity $C_i$ of \gls{ue} $i$ is defined as
\begin{equation}
    C_{i} = B \log_{2} \left(1+10^{\gamma_i/10} \right).
    \label{eq:C}
\end{equation}

\begin{figure*}[t!]
\centering 
\includegraphics[width=0.9\textwidth]{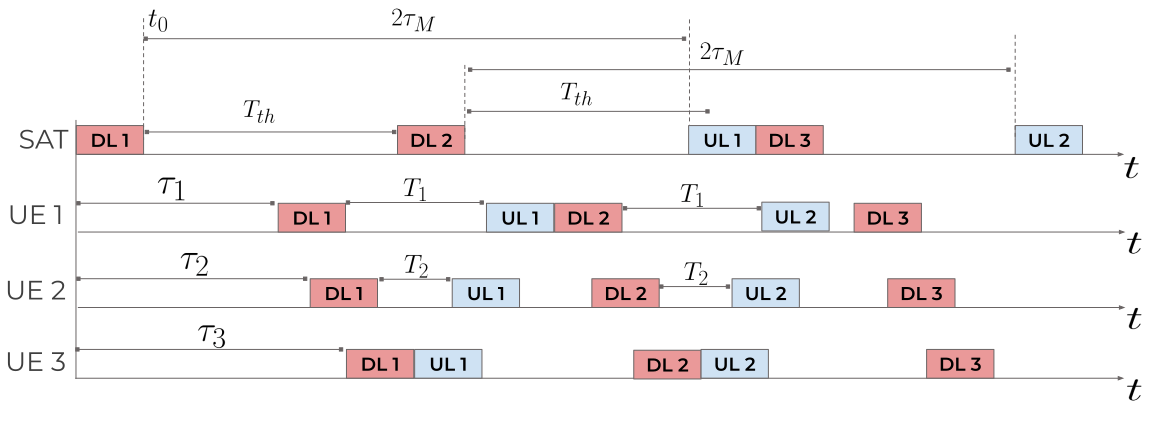}
\caption{\vspace{-0.5cm}Illustration of the proposed \acrfull{essa} mechanism.}
\label{fig:proposed}
\end{figure*}

\subsection{TDD Frame Structure}
\label{sub:frame}
The one-way propagation delay for \gls{ue} $i$ is given by
\begin{equation}
\tau_{i} = {d_i}/{c},
\end{equation}
where $c$ is the speed of light.
In~\cref{fig:fig1} (middle) we illustrate the slot allocation in TDD for two \glspl{ue} under the coverage of the same satellite gNB.
The satellite transmits a \gls{dl} slot, which is received by \gls{ue} 1 after $\tau_1$, and by \gls{ue} 2 after $\tau_2\gg\tau_1$.\footnote{For a cell radius of 1000 km, the differential propagation delay of UEs in the cell can be up to 10 ms~\cite[Table 7.2.1.1.1.2-1]{38821}.}
If a \gls{ul} slot for both UEs is allocated immediately after the corresponding \gls{dl} slot, synchronization of the \gls{ul} slots at the satellite reference point is not achieved due to the different propagation delays of the \glspl{ue}.
This misalignment results in uncoordinated transmissions, causing interference.

To address this issue, a \gls{ta} mechanism should be applied~\cite{38213,38133}.
In \gls{3gpp} Release 17, each \gls{ue} is assumed to have knowledge of both its own position and the satellite position using \gls{gnss} coordinates, which permits to estimate the propagation delay of the UL channel~\cite{NTN24Saad}.
\Glspl{ue} can then report this estimate to the satellite, which computes the longest propagation delay in the cell, denoted as $\tau_M$, corresponding to the furthest \gls{ue} within the coverage area of the satellite.
This information is included in the \gls{pdcch} to perform \gls{ta} and maintain synchronization.
As shown in~\cref{fig:fig1} (bottom), 
if each \gls{ue} $i$ applies TA of duration $T_i = 2(\tau_M - \tau_i)$, i.e., UL slots are ``advanced'' by $T_i$ relative to the corresponding \gls{dl} slots, then \gls{ul} slots are also synchronized at the satellite reference point.
In addition, the satellite can schedule different \gls{ofdm} symbols within the same \gls{ul} slot to different \glspl{ue}, without causing interference.
 
However, in this setup, the guard period between a \gls{dl} and an \gls{ul} slot must be equal to $2\tau_M$, thus the overhead depends on the longest link in the network.
For a LEO satellite at an altitude of 800~km and with an elevation angle of 70°, the distance at the cell edge, from~\cref{eq:distance}, is approximately 845~km, resulting in a propagation delay of 2.82 ms.
As a result, the guard period is up to 5.64 ms, which is more than the duration of 5 subframes in 5G \gls{nr}.

The excessive length of guard periods, where no transmissions should be scheduled to avoid interference, would lead to a very inefficient use of radio resources in satellite networks, which motivates our research toward more advanced slot allocation mechanisms, as described in~\cref{sec:ssa}.

\section{Enhanced Synchronized \\ Slot Allocation and Scheduling}
\label{sec:ssa}
\subsection{\acrfull{essa}}
\label{sub:essa}
One possible way to reduce the impact of the overhead in TDD is to allocate multiple slots of the same type (i.e., UL or DL), to reduce the number of required guard periods where no transmissions should be scheduled.

Along these lines, in this paper we propose \acrfull{essa}, which is designed to allocate multiple \gls{dl} slots during guard periods, before the corresponding \gls{ul} slot, if they do not interfere with other concurrent transmissions.
Let $\tau_m$ be the propagation delay of the shortest link in the network, while $t_{\rm UL}$ is the duration of \gls{ul} transmissions. Then, we define $T_{th}$~as
\begin{equation}
T_{th} = 2(\tau_M - \tau_m) + t_{\rm UL}.
\end{equation}

\begin{lem}
In TDD, if a \gls{dl} transmission is completed at time $t_0$, additional concurrent \gls{dl} transmissions scheduled by the satellite between $t_0 + T_{th}$ and $t_0 + 2\tau_M$ do not create interference, and can be successfully received by all \glspl{ue} within the coverage area of the satellite.
\end{lem}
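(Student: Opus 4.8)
The plan is to track the timeline of transmissions both at the satellite (the common reference point) and at each UE, and to turn the phrase ``do not create interference, and can be successfully received by all UEs'' into two half-duplex non-overlap constraints: one at the satellite and one at every UE. First I would fix the reference. Since the DL slot completes at the satellite at $t_0$ and each UE $i$ applies $T_i = 2(\tau_M - \tau_i)$, the UL burst of UE $i$ (of duration $t_{\rm UL}$) is emitted during $[\,t_0 + 2\tau_M - \tau_i,\; t_0 + 2\tau_M - \tau_i + t_{\rm UL}\,]$ in the satellite's absolute time, so that after propagating up for $\tau_i$ it lands in the common window $[\,t_0 + 2\tau_M,\; t_0 + 2\tau_M + t_{\rm UL}\,]$, which is the synchronization already established for the baseline TA scheme.

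Next I would consider an additional DL transmission emitted by the satellite at time $t$ and impose the two constraints. At the satellite, being half-duplex, this DL must not collide with the incoming (synchronized) UL whose reception begins at $t_0 + 2\tau_M$; hence $t$ must precede $t_0 + 2\tau_M$, which gives the upper end of the claimed window. At the UE side, the DL reaches UE $i$ at $t + \tau_i$, and UE $i$ can decode it only if it is not simultaneously emitting its own UL, i.e.\ $t + \tau_i$ must avoid $[\,t_0 + 2\tau_M - \tau_i,\; t_0 + 2\tau_M - \tau_i + t_{\rm UL}\,]$. Requiring the arrival to be no earlier than the end of that UE's UL, $t + \tau_i \ge t_0 + 2\tau_M - \tau_i + t_{\rm UL}$, is equivalent to $t \ge t_0 + 2(\tau_M - \tau_i) + t_{\rm UL}$; the right-hand side is largest for the \emph{closest} UE ($\tau_i = \tau_m$), giving precisely $t \ge t_0 + 2(\tau_M - \tau_m) + t_{\rm UL} = t_0 + T_{th}$, which is the lower end of the window. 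Since $\tau_i \ge \tau_m$ for every user, the surplus $t + \tau_i - (t_0 + 2\tau_M - \tau_i + t_{\rm UL}) = 2(\tau_i - \tau_m) \ge 0$ confirms that clearing the worst-case UE clears all of them simultaneously.

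Combining the two constraints yields the window $[\,t_0 + T_{th},\; t_0 + 2\tau_M\,]$, within which the additional DL overlaps neither the satellite's UL reception nor any UE's UL emission, so it is interference-free and correctly received by every UE, proving the claim. The step I expect to be the main obstacle is the careful bookkeeping of the three distinct clocks (satellite emission, UE reception, and TA-advanced UE emission) without sign errors, together with correctly identifying that it is the \emph{nearest} UE, not the farthest, that is binding for the lower bound: the timing advance inverts the naive intuition, because the near UE's large $T_i$ is exactly what keeps it occupying the channel with UL latest within the guard period.
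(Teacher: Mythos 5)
Your proof is correct and takes essentially the same route as the paper's: both reduce the claim to the per-UE non-overlap condition that the additional DL arrival $t+\tau_i$ must not precede the end $t_0 + 2(\tau_M-\tau_i) + t_{\rm UL} + \tau_i$ of UE $i$'s TA-advanced UL emission, observe that the nearest UE ($\tau_i = \tau_m$) is the binding case, and conclude from $\tau_m = \min_i(\tau_i)$. If anything, you are slightly more thorough than the paper, since you treat an arbitrary scheduling instant $t$ in the window and justify the upper endpoint $t_0 + 2\tau_M$ via the satellite's half-duplex constraint, whereas the paper's proof only verifies the inequality at $t = t_0 + T_{th}$ and leaves the upper bound implicit in the lemma statement.
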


\begin{proof}
Assume that an additional \gls{dl} slot (DL 2 in~\cref{fig:proposed}), scheduled after $T_{th}$ from the first \gls{dl} slot (DL 1 in~\cref{fig:proposed}), is received by \gls{ue} $i$ after $T_{th} + \tau_i$.
Moreover, assume that the \gls{ul} slot relative to the first \gls{dl} slot (UL 1 in~\cref{fig:proposed}) is transmitted by \gls{ue} $i$ at $t_0+ \tau_i+2(\tau_M - \tau_i)$, for a duration of $t_{\rm UL}$.
Therefore, the additional \gls{dl} slot does not interfere if
\begin{equation}
2(\tau_M - \tau_m) + t_{\rm UL} + \tau_i \geq  \tau_i+2(\tau_M - \tau_i) + t_{\rm UL},
\end{equation}
which is always verified if $\tau_m \leq \tau_i \; \forall i$, which is true since $\tau_m = \min_i (\tau_i)$.
\end{proof}

Indeed, additional \gls{dl} slots can be scheduled after a time interval equal to $2(\tau_M - \tau_m) + t_{\rm UL}$ from the end of the previous \gls{dl} slot, provided that $2\tau_m \geq t_{\rm UL}$. 
This inequality is generally verified in \gls{ntn} scenarios.
For example, for a LEO satellite at an altitude of 600 km, the minimum one-way propagation delay (at the zenith) is approximately $\tau_m=2$~ms.
As a result, the condition is satisfied if the duration of \gls{ul} transmissions is $t_{\rm UL}\leq 4$ ms, corresponding to 4 subframes in 5G \gls{nr}, which is a reasonable assumption in most network configurations.
This approach permits to schedule resources that would otherwise be unused because of guard periods, thereby reducing the communication overhead.

As shown in~\cref{fig:proposed}, the second \gls{dl} slot ({DL 2}) schedules 
another \gls{ul} slot after $2\tau_M$. If there are available slots after a time interval equal to $T_{th}$ from the end of {DL 2}, other \gls{dl} slots can be scheduled.
For instance, {DL 3} is scheduled immediately after the reception of {UL 1}.
This scheduling process continues iteratively until all packets are scheduled.
This approach, however, would inevitably reduce the \gls{ul} capacity in favor of the DL capacity since a lower number of \gls{ul} slots can be allocated, as we will evaluate in~\cref{sec:simulation_results}.

\begin{figure*}[t!]
    \begin{subfigure}[b]{\linewidth}
	\centering
%
%

\definecolor{black25}{RGB}{25,25,25}
\definecolor{mediumaquamarine102194165}{RGB}{102,194,165}
\definecolor{limegreen3122331}{RGB}{31,223,31}
\definecolor{silver}{RGB}{192,192,192}
\definecolor{lightgreen178223138}{RGB}{178,223,138}
\definecolor{lightsteelblue173203219}{RGB}{173,203,219}
\definecolor{steelblue31120180}{RGB}{31,120,180}


\definecolor{color2}{RGB}{27, 129, 121}
\definecolor{color1}{RGB}{212, 155, 44}

\begin{tikzpicture}
\pgfplotsset{every tick label/.append style={font=\scriptsize}}

\pgfplotsset{compat=1.11,
	/pgfplots/ybar legend/.style={
		/pgfplots/legend image code/.code={%
			\draw[##1,/tikz/.cd,yshift=-0.25em]
			(0cm,0cm) rectangle (10pt,0.6em);},
	},
}

\begin{axis}[%
width=0,
height=0,
at={(0,0)},
scale only axis,
xmin=0,
xmax=0,
xtick={},
ymin=0,
ymax=0,
ytick={},
axis background/.style={fill=white},
legend style={legend cell align=left,
              align=center,
              draw=white!15!black,
              at={(0.5, 1.3)},
              anchor=center,
              /tikz/every even column/.append style={column sep=1em}},
legend columns=2,
]
\addplot[ybar,ybar legend,draw=black,fill=color1,line width=0.08pt]
table[row sep=crcr]{%
	0	0\\
};
\addlegendentry{{TA (Benchmark)}}

\addplot[ybar legend,ybar,draw=black,fill=color2,line width=0.08pt]
  table[row sep=crcr]{%
	0	0\\
};
\addlegendentry{{ESSA (Proposed)}}

\end{axis}
\end{tikzpicture}%
	\end{subfigure}
 \vskip 0.2cm
    \centering
    \subfloat[][$h = 600$~km.]
	{
	    \label{fig:switch_duration_alpha}
\begin{tikzpicture}

\definecolor{darkslategray38}{RGB}{38,38,38}
\definecolor{darkslategray66}{RGB}{66,66,66}
\definecolor{lightgray204}{RGB}{204,204,204}
\definecolor{color2}{RGB}{27, 129, 121}
\definecolor{color1}{RGB}{212, 155, 44}

\begin{axis}[
width = \textwidth/2.1,
height = 5cm,
axis line style={lightgray204},
tick align=outside,
unbounded coords=jump,
x grid style={lightgray204},
xlabel=\textcolor{darkslategray38}{Minimum elevation angle ($\alpha_{min}$) [deg]},
xmajorticks=true,
xmin=-0.5, xmax=3.5,
xtick style={color=darkslategray38},
xtick={0,1,2,3},
xticklabels={40,50,60,70},
y grid style={lightgray204},
ylabel=\textcolor{darkslategray38}{Average guard period ($\Bar{t}_{\rm GP}$) [ms]},
ymajorgrids,
ymajorticks=true,
ymin=0, ymax=6.3,
ytick style={color=darkslategray38}
]
\draw[draw=white,fill=color1,line width=0.32pt] (axis cs:-0.4,0) rectangle (axis cs:0,6);
\draw[draw=white,fill=color1,line width=0.32pt] (axis cs:0.6,0) rectangle (axis cs:1,5.01973684210526);
\draw[draw=white,fill=color1,line width=0.32pt] (axis cs:1.6,0) rectangle (axis cs:2,4.75);
\draw[draw=white,fill=color1,line width=0.32pt] (axis cs:2.6,0) rectangle (axis cs:3,4.10326086956522);
\draw[draw=white,fill=color2,line width=0.32pt] (axis cs:-2.77555756156289e-17,0) rectangle (axis cs:0.4,2.07848837209302);

\draw[draw=white,fill=color2,line width=0.32pt] (axis cs:1,0) rectangle (axis cs:1.4,1.29038461538462);
\draw[draw=white,fill=color2,line width=0.32pt] (axis cs:2,0) rectangle (axis cs:2.4,0.782216494845361);
\draw[draw=white,fill=color2,line width=0.32pt] (axis cs:3,0) rectangle (axis cs:3.4,0.268782383419689);
\addplot [line width=0.864pt, darkslategray66, forget plot]
table {%
-0.2 nan
-0.2 nan
};
\addplot [line width=0.864pt, darkslategray66, forget plot]
table {%
0.8 nan
0.8 nan
};
\addplot [line width=0.864pt, darkslategray66, forget plot]
table {%
1.8 nan
1.8 nan
};
\addplot [line width=0.864pt, darkslategray66, forget plot]
table {%
2.8 nan
2.8 nan
};
\addplot [line width=0.864pt, darkslategray66, forget plot]
table {%
0.2 nan
0.2 nan
};
\addplot [line width=0.864pt, darkslategray66, forget plot]
table {%
1.2 nan
1.2 nan
};
\addplot [line width=0.864pt, darkslategray66, forget plot]
table {%
2.2 nan
2.2 nan
};
\addplot [line width=0.864pt, darkslategray66, forget plot]
table {%
3.2 nan
3.2 nan
};
\end{axis}

\end{tikzpicture}
	}
   \subfloat[][$h = 600$~km.]
	{
		\label{fig:channel_usage_alpha}
\begin{tikzpicture}

\definecolor{darkslategray38}{RGB}{38,38,38}
\definecolor{darkslategray66}{RGB}{66,66,66}
\definecolor{lightgray204}{RGB}{204,204,204}
\definecolor{color2}{RGB}{255,166,0}
\definecolor{color1}{RGB}{0,100,160}
\definecolor{color2}{RGB}{27, 129, 121}
\definecolor{color1}{RGB}{212, 155, 44}

\begin{axis}[
width = \textwidth/2.1,
height = 5cm,
axis line style={lightgray204},
tick align=outside,
unbounded coords=jump,
x grid style={lightgray204},
xlabel=\textcolor{darkslategray38}{Minimum elevation angle ($\alpha_{min}$) [deg]},
xmajorticks=true,
xmin=-0.5, xmax=3.5,
xtick style={color=darkslategray38},
xtick={0,1,2,3},
xticklabels={40,50,60,70},
y grid style={lightgray204},
ylabel=\textcolor{darkslategray38}{Average channel usage ($\Bar{\rho}$) [\%]},
ymajorgrids,
ymajorticks=true,
ymin=0, ymax=50.53125,
ytick style={color=darkslategray38}
]
\draw[draw=white,fill=color1,line width=0.32pt] (axis cs:-0.4,0) rectangle (axis cs:0,4);

\draw[draw=white,fill=color1,line width=0.32pt] (axis cs:0.6,0) rectangle (axis cs:1,4.625);
\draw[draw=white,fill=color1,line width=0.32pt] (axis cs:1.6,0) rectangle (axis cs:2,5);
\draw[draw=white,fill=color1,line width=0.32pt] (axis cs:2.6,0) rectangle (axis cs:3,5.625);
\draw[draw=white,fill=color2,line width=0.32pt] (axis cs:-2.77555756156289e-17,0) rectangle (axis cs:0.4,10.625);

\draw[draw=white,fill=color2,line width=0.32pt] (axis cs:1,0) rectangle (axis cs:1.4,16.125);
\draw[draw=white,fill=color2,line width=0.32pt] (axis cs:2,0) rectangle (axis cs:2.4,24.125);
\draw[draw=white,fill=color2,line width=0.32pt] (axis cs:3,0) rectangle (axis cs:3.4,48.125);
\addplot [line width=0.864pt, darkslategray66, forget plot]
table {%
-0.2 nan
-0.2 nan
};
\addplot [line width=0.864pt, darkslategray66, forget plot]
table {%
0.8 nan
0.8 nan
};
\addplot [line width=0.864pt, darkslategray66, forget plot]
table {%
1.8 nan
1.8 nan
};
\addplot [line width=0.864pt, darkslategray66, forget plot]
table {%
2.8 nan
2.8 nan
};
\addplot [line width=0.864pt, darkslategray66, forget plot]
table {%
0.2 nan
0.2 nan
};
\addplot [line width=0.864pt, darkslategray66, forget plot]
table {%
1.2 nan
1.2 nan
};
\addplot [line width=0.864pt, darkslategray66, forget plot]
table {%
2.2 nan
2.2 nan
};
\addplot [line width=0.864pt, darkslategray66, forget plot]
table {%
3.2 nan
3.2 nan
};
\end{axis}

\end{tikzpicture}
	}
    \vskip 0.2cm
    \subfloat[][ $\alpha_{min}=50^\circ$.]
	{
	    \label{fig:switch_duration_altitude}
\begin{tikzpicture}

\definecolor{darkslategray38}{RGB}{38,38,38}
\definecolor{darkslategray66}{RGB}{66,66,66}
\definecolor{lightgray204}{RGB}{204,204,204}
\definecolor{color2}{RGB}{255,166,0}
\definecolor{color1}{RGB}{0,100,160}
\definecolor{color2}{RGB}{27, 129, 121}
\definecolor{color1}{RGB}{212, 155, 44}

\begin{axis}[
width = \textwidth/2.1,
height = 5cm,
axis line style={lightgray204},
tick align=outside,
unbounded coords=jump,
x grid style={lightgray204},
xlabel=\textcolor{darkslategray38}{Satellite altitude ($h$) [km]},
xmajorticks=true,
xmin=-0.5, xmax=5.5,
xtick style={color=darkslategray38},
xtick={0,1,2,3,4,5},
xticklabels={300,400,500,600,700,800},
y grid style={lightgray204},
ylabel=\textcolor{darkslategray38}{Average guard period ($\Bar{t}_{\rm GP}$) [ms]},
ymajorgrids,
ymajorticks=true,
ymin=0, ymax=6.31640625,
ytick style={color=darkslategray38}
]
\draw[draw=white,fill=color1,line width=0.32pt] (axis cs:-0.4,0) rectangle (axis cs:0,2.69852941176471);

\draw[draw=white,fill=color1,line width=0.32pt] (axis cs:0.6,0) rectangle (axis cs:1,3.33035714285714);
\draw[draw=white,fill=color1,line width=0.32pt] (axis cs:1.6,0) rectangle (axis cs:2,3.92708333333333);
\draw[draw=white,fill=color1,line width=0.32pt] (axis cs:2.6,0) rectangle (axis cs:3,4.7625);
\draw[draw=white,fill=color1,line width=0.32pt] (axis cs:3.6,0) rectangle (axis cs:4,6);
\draw[draw=white,fill=color1,line width=0.32pt] (axis cs:4.6,0) rectangle (axis cs:5,6.015625);
\draw[draw=white,fill=color2,line width=0.32pt] (axis cs:-2.77555756156289e-17,0) rectangle (axis cs:0.4,0.772959183673469);

\draw[draw=white,fill=color2,line width=0.32pt] (axis cs:1,0) rectangle (axis cs:1.4,1.03525641025641);
\draw[draw=white,fill=color2,line width=0.32pt] (axis cs:2,0) rectangle (axis cs:2.4,1.06907894736842);
\draw[draw=white,fill=color2,line width=0.32pt] (axis cs:3,0) rectangle (axis cs:3.4,1.36693548387097);
\draw[draw=white,fill=color2,line width=0.32pt] (axis cs:4,0) rectangle (axis cs:4.4,1.60648148148148);
\draw[draw=white,fill=color2,line width=0.32pt] (axis cs:5,0) rectangle (axis cs:5.4,1.67788461538462);
\addplot [line width=0.864pt, darkslategray66, forget plot]
table {%
-0.2 nan
-0.2 nan
};
\addplot [line width=0.864pt, darkslategray66, forget plot]
table {%
0.8 nan
0.8 nan
};
\addplot [line width=0.864pt, darkslategray66, forget plot]
table {%
1.8 nan
1.8 nan
};
\addplot [line width=0.864pt, darkslategray66, forget plot]
table {%
2.8 nan
2.8 nan
};
\addplot [line width=0.864pt, darkslategray66, forget plot]
table {%
3.8 nan
3.8 nan
};
\addplot [line width=0.864pt, darkslategray66, forget plot]
table {%
4.8 nan
4.8 nan
};
\addplot [line width=0.864pt, darkslategray66, forget plot]
table {%
0.2 nan
0.2 nan
};
\addplot [line width=0.864pt, darkslategray66, forget plot]
table {%
1.2 nan
1.2 nan
};
\addplot [line width=0.864pt, darkslategray66, forget plot]
table {%
2.2 nan
2.2 nan
};
\addplot [line width=0.864pt, darkslategray66, forget plot]
table {%
3.2 nan
3.2 nan
};
\addplot [line width=0.864pt, darkslategray66, forget plot]
table {%
4.2 nan
4.2 nan
};
\addplot [line width=0.864pt, darkslategray66, forget plot]
table {%
5.2 nan
5.2 nan
};
\end{axis}

\end{tikzpicture}
	}
   \subfloat[][ $\alpha_{min}=50^\circ$.]
	{
		\label{fig:channel_usage_altitude}
\begin{tikzpicture}

\definecolor{darkslategray38}{RGB}{38,38,38}
\definecolor{darkslategray66}{RGB}{66,66,66}
\definecolor{lightgray204}{RGB}{204,204,204}
\definecolor{color2}{RGB}{255,166,0}
\definecolor{color1}{RGB}{0,100,160}
\definecolor{color2}{RGB}{27, 129, 121}
\definecolor{color1}{RGB}{212, 155, 44}

\begin{axis}[
width = \textwidth/2.1,
height = 5cm,
axis line style={lightgray204},
tick align=outside,
unbounded coords=jump,
x grid style={lightgray204},
xlabel=\textcolor{darkslategray38}{Satellite altitude ($h$) [km]},
xmajorticks=true,
xmin=-0.5, xmax=5.5,
xtick style={color=darkslategray38},
xtick={0,1,2,3,4,5},
xticklabels={300,400,500,600,700,800},
y grid style={lightgray204},
ylabel=\textcolor{darkslategray38}{Average channel usage ($\Bar{\rho}$) [\%]},
ymajorgrids,
ymajorticks=true,
ymin=0, ymax=25.4625,
ytick style={color=darkslategray38}
]
\draw[draw=white,fill=color1,line width=0.32pt] (axis cs:-0.4,0) rectangle (axis cs:0,8.25);

\draw[draw=white,fill=color1,line width=0.32pt] (axis cs:0.6,0) rectangle (axis cs:1,6.75);
\draw[draw=white,fill=color1,line width=0.32pt] (axis cs:1.6,0) rectangle (axis cs:2,5.75);
\draw[draw=white,fill=color1,line width=0.32pt] (axis cs:2.6,0) rectangle (axis cs:3,4.75);
\draw[draw=white,fill=color1,line width=0.32pt] (axis cs:3.6,0) rectangle (axis cs:4,4);
\draw[draw=white,fill=color1,line width=0.32pt] (axis cs:4.6,0) rectangle (axis cs:5,3.75);
\draw[draw=white,fill=color2,line width=0.32pt] (axis cs:-2.77555756156289e-17,0) rectangle (axis cs:0.4,24.25);

\draw[draw=white,fill=color2,line width=0.32pt] (axis cs:1,0) rectangle (axis cs:1.4,19.25);
\draw[draw=white,fill=color2,line width=0.32pt] (axis cs:2,0) rectangle (axis cs:2.4,18.75);
\draw[draw=white,fill=color2,line width=0.32pt] (axis cs:3,0) rectangle (axis cs:3.4,15.25);
\draw[draw=white,fill=color2,line width=0.32pt] (axis cs:4,0) rectangle (axis cs:4.4,13.25);
\draw[draw=white,fill=color2,line width=0.32pt] (axis cs:5,0) rectangle (axis cs:5.4,12.75);
\addplot [line width=0.864pt, darkslategray66, forget plot]
table {%
-0.2 nan
-0.2 nan
};
\addplot [line width=0.864pt, darkslategray66, forget plot]
table {%
0.8 nan
0.8 nan
};
\addplot [line width=0.864pt, darkslategray66, forget plot]
table {%
1.8 nan
1.8 nan
};
\addplot [line width=0.864pt, darkslategray66, forget plot]
table {%
2.8 nan
2.8 nan
};
\addplot [line width=0.864pt, darkslategray66, forget plot]
table {%
3.8 nan
3.8 nan
};
\addplot [line width=0.864pt, darkslategray66, forget plot]
table {%
4.8 nan
4.8 nan
};
\addplot [line width=0.864pt, darkslategray66, forget plot]
table {%
0.2 nan
0.2 nan
};
\addplot [line width=0.864pt, darkslategray66, forget plot]
table {%
1.2 nan
1.2 nan
};
\addplot [line width=0.864pt, darkslategray66, forget plot]
table {%
2.2 nan
2.2 nan
};
\addplot [line width=0.864pt, darkslategray66, forget plot]
table {%
3.2 nan
3.2 nan
};
\addplot [line width=0.864pt, darkslategray66, forget plot]
table {%
4.2 nan
4.2 nan
};
\addplot [line width=0.864pt, darkslategray66, forget plot]
table {%
5.2 nan
5.2 nan
};
\end{axis}

\end{tikzpicture}
	}
 \caption{Average guard period ($\Bar{t}_{\rm GP}$) and channel usage ($\Bar{\rho}$) vs. $\alpha_{min}$ and $h$.}
 \label{fig:alpha}
 \end{figure*}

\subsection{Scheduling}
\label{sub:scheduling}

In order to validate the performance of ESSA, we introduce two scheduling methods that select a subset of $N_{s}\leq N_{\rm UE}$ UEs that should transmit to maximize the Ergodic capacity of the cell. We call $S$ the set of selected UEs and $U$ the set of available UEs.
We consider the following scheduling methods.
\begin{itemize}
   \item MG (Benchmark):
   The satellite selects the $N_s$ \glspl{ue} with the highest \gls{snr}, i.e.,
    \begin{subequations}
        \begin{align}
        & S = \underset{S}{\argmax}
        & & \left( \gamma_S \right) \\
        & \text{subject to}
        & & \gamma_S = \min_{i\in S} (\gamma_i), \\
        &&& |S| = N_{s}, \\
        &&& S \subseteq U.
        \end{align}
    \label{eq:S_MG}
    \end{subequations}
   Therefore, MG selects UEs with the best channel conditions in an attempt to maximize the network~capacity. 
       
   \item MS:
   The satellite selects the $N_s$ \glspl{ue} with the minimum differential propagation delay, i.e.,
    \begin{subequations}
        \begin{align}
        & S = \underset{S}{\argmin}
        & & \left( \tau_S \right) \\
        & \text{subject to}
        & & \tau_S = \max_{i,j\in S} | \tau_i - \tau_j |, \\
        &&& |S| = N_{s}, \\
        &&& S \subseteq U.
        \end{align}
    \label{eq:S_MG}
    \label{eq:S_MS}
    \end{subequations}
   Therefore,
   MS prioritizes \glspl{ue} with similar propagation delays, regardless of the actual channel conditions. 
 \end{itemize}
These scheduling methods can be combined with ESSA for allocating resources in TDD, as we will demonstrate and evaluate in~\cref{sub:results-scheduling}.

\section{Simulation results}
\label{sec:simulation_results}
The simulation parameters are reported in~\cref{tab:parameters}. Specifically, our scenario consists of a LEO satellite at a fixed altitude $h\in\{300,\dots,800\}$ km, while the elevation angle $\alpha_i$ of a generic UE $i$ is modeled according to a uniform distribution in $[\alpha_{min},\,\alpha_{max}]$, where $\alpha_{min}$ and $\alpha_{max}$ represent the minimum and maximum elevation angles, respectively.
We consider UL and DL transmissions in the Ka-bands, i.e., in the \gls{mmwave} spectrum~\cite{giordani2020satellite}, at frequency $f=28$ GHz and with a bandwidth $B=200$~MHz.

\begin{table}[t!]
\renewcommand{\arraystretch}{1.3}
\footnotesize
\centering
\caption{Simulation parameters.}
\label{tab:parameters}
\begin{tabular}{|l|l|}
\hline
\textbf{Parameter}                              & \textbf{Values}                 \\ \hline
Transmit power ($P^{tx}$) [dBW]                        & $-6$                           \\
Total antenna gain ($G^{tx}+G^{rx}$) [dBi]                       & 24                 \\
Bandwidth ($B$) [MHz]                         & 200                            \\
Carrier frequency ($f$) [GHz]                        & 28                           \\
Noise temperature ($T$) [K]                        & 290                           \\
Noise figure ($N_f$) [dB]                        & 5                           \\
Number of UEs ($N_{\rm UE}$)                        & 100                          \\
Selectable UEs ($N_{s}$)                        & 10                          \\
Maximum elevation angle ($\alpha_{max}$) [deg]     & 90                          \\
Radius of the Earth ($R_E$) [km]     & 6371                          \\
Channel parameters ($\{A_g,A_s,\sigma_s\}$) & \cite{38811}   \\
\hline
\end{tabular}
\vspace{-1.2em}
\end{table}

Results are given in terms of the following metrics:
\begin{itemize}
    \item Average guard period ($\Bar{t}_{\rm GP}$), i.e., the average time between a \gls{dl} slot and the corresponding \gls{ul} slot, observed from the satellite reference~point.
    \item Average channel usage ($\Bar{\rho}$), i.e., the average percentage of allocated slots (for either \gls{ul} or \gls{dl} transmissions) over the total number of available slots. Therefore, it is an indication of the system's overhead. For instance, if $\rho = 0.5$, half of the slots are allocated for transmissions, while the rest is used as guard periods.
    \item Average capacity ($\Bar{C}$), i.e., the average of the Ergodic capacity of the $N_s$ selected UEs in the cell via MS or MG scheduling, where the Ergodic capacity of the single UE was defined in~\cref{eq:C}.
\end{itemize}

In~\cref{sub:results-slot} we will compare the performance of the baseline TDD slot allocation with TA described in~\cref{sub:frame}, vs. the proposed \gls{essa} scheme described in~\cref{sub:essa} where slot allocation is optimized for NTN. 
Then, in~\cref{sub:results-scheduling} we will compare MG and MS in terms of scheduling.

\subsection{Slot Allocation Results}
\label{sub:results-slot}
In this section we compare the performance of TA vs. ESSA in terms of the duration of guard periods and the channel usage. In TA, the guard period is constant and equal to $2\tau_{M}$, which is required to maintain synchronization via timing advance. In ESSA, additional DL slots can be scheduled during guard periods based on Lemma 1.

\subsubsection*{Minimum elevation angle}
First, in~\cref{fig:switch_duration_alpha,fig:channel_usage_alpha} we analyze the impact of the minimum elevation angle $\alpha_{min}$ with a LEO satellite at an altitude $h=600$~km.\footnote{We investigate the value of $\alpha_{min}$ in the range from $40^\circ$ to $70^\circ$, which is in line with most commercial satellite constellations, as described in~\cite{9473799}.}
In general, as $\alpha_{min}$ increases, $\Bar{t}_{\rm GP}$ decreases.
This occurs for TA because the length of the longest link in the network decreases as $\alpha_{min}$ increases, and so does the corresponding propagation delay $\tau_{M}$.
A similar trend is observed for {ESSA}, as $T_{th}$ also decreases as $\alpha_{min}$ increases.
In any case, ESSA outperforms TA.
For TA, even with $\alpha_{min}=70^\circ$, which guarantees a good alignment between the LEO satellite and the UEs on the ground, $\Bar{t}_{\rm GP}\simeq4$~ms, with $\Bar{\rho}\simeq 5.6\%$.


In contrast, for {ESSA} we have $\Bar{t}_{\rm GP}<1~$ms, and $\Bar{\rho}\simeq 48\%$. This is because ESSA can schedule multiple DL slots during guard periods before a UL slot, which permits a more efficient use of resources.

\subsubsection*{Satellite altitude}
In~\cref{fig:switch_duration_altitude,fig:channel_usage_altitude} we analyze the impact of the satellite altitude $h$, with $\alpha_{min}=50^\circ$.
As $h$ increases, both $\tau_{M}$ and $T_{th}$ increase, resulting in a longer guard period for both schemes.
At $h=300$~km, {ESSA} achieves a channel usage of nearly $25\%$, with an average guard period of less than 1~ms.
Even at $h=800$~km, {ESSA} maintains a channel usage of over $10\%$, i.e., almost three times higher than TA, with a guard period under 2~ms.

The above results demonstrate that ESSA can effectively reduce the overhead for slot allocation compared to TA even in those scenarios characterized by a very long propagation delay, and stands out as a promising approach for NTN.

\subsection{Scheduling Results}
\label{sub:results-scheduling}

We now integrate the two slot allocation schemes presented in~\cref{sub:frame} and~\cref{sub:essa} (TA and ESSA)  with the two scheduling methods presented in~\cref{sub:scheduling} (MG and MS). This results in the following combinations:
\begin{itemize}
    \item MG-TA and MS-TA: Slot allocation is based on the benchmark TDD frame structure with TA, i.e., with no optimization of guard periods. 
    \item MG-ESSA and MS-ESSA: Slot allocation is based on ESSA,  so additional DL slots can be allocated during guard periods if $2\tau_m \geq t_{\rm UL}$.
\end{itemize}

\begin{figure*}[t!]
	\begin{subfigure}[b]{\linewidth}
		\centering
%
%

\definecolor{black25}{RGB}{25,25,25}
\definecolor{mediumaquamarine102194165}{RGB}{102,194,165}
\definecolor{limegreen3122331}{RGB}{31,223,31}
\definecolor{silver}{RGB}{192,192,192}
\definecolor{lightgreen178223138}{RGB}{178,223,138}
\definecolor{lightsteelblue173203219}{RGB}{173,203,219}
\definecolor{steelblue31120180}{RGB}{31,120,180}

\definecolor{color2}{RGB}{27, 129, 121}
\definecolor{color1}{RGB}{212, 155, 44}

\begin{tikzpicture}
\pgfplotsset{every tick label/.append style={font=\scriptsize}}

\pgfplotsset{compat=1.11,
	/pgfplots/ybar legend/.style={
		/pgfplots/legend image code/.code={%
			\draw[##1,/tikz/.cd,yshift=-0.25em]
			(0cm,0cm) rectangle (10pt,0.6em);},
	},
}

\begin{axis}[%
width=0,
height=0,
at={(0,0)},
scale only axis,
xmin=0,
xmax=0,
xtick={},
ymin=0,
ymax=0,
ytick={},
axis background/.style={fill=white},
legend style={legend cell align=left,
              align=center,
              draw=white!15!black,
              at={(0.5, 1.3)},
              anchor=center,
              /tikz/every even column/.append style={column sep=1em}},
legend columns=4,
]
\addplot[ybar,ybar legend,draw=black,fill=color1,line width=0.08pt]
table[row sep=crcr]{%
	0	0\\
};
\addlegendentry{{TA (Benchmark)}}

\addplot[ybar legend,ybar,draw=black,fill=color2,line width=0.08pt]
  table[row sep=crcr]{%
	0	0\\
};
\addlegendentry{{ESSA}}

\addplot[ybar,ybar legend,draw=black,fill=white,line width=0.08pt,postaction={
	pattern=north east lines
}]
table[row sep=crcr]{%
	0	0\\
};
\addlegendentry{MG}

\addplot[ybar,ybar legend,draw=black,fill=white,line width=0.08pt]
table[row sep=crcr]{%
	0	0\\
};
\addlegendentry{MS}

\end{axis}
\end{tikzpicture}%
	\end{subfigure}
	\vskip 0.2cm
	\centering
    \subfloat[][DSU slot pattern.]
	{
	    \label{fig:snr}
        \input{images/snr2}
	}
   \subfloat[][DSU slot pattern.]
	{
		\label{fig:capacity}
\begin{tikzpicture}

\definecolor{darkslategray38}{RGB}{38,38,38}
\definecolor{darkslategray66}{RGB}{66,66,66}
\definecolor{color2}{RGB}{181,92,95}
\definecolor{lightgray204}{RGB}{204,204,204}
\definecolor{color1}{RGB}{95,157,109}
\definecolor{color2}{RGB}{203,136,99}
\definecolor{color1}{RGB}{88,116,163}

\definecolor{color2}{RGB}{27, 129, 121}
\definecolor{color1}{RGB}{212, 155, 44}

\begin{axis}[
width = \textwidth/2.1,
height = 5cm,
axis line style={lightgray204},
legend cell align={left},
legend style={fill opacity=0.8, draw opacity=1, text opacity=1, draw=none},
tick align=outside,
x grid style={lightgray204},
xlabel=\textcolor{darkslategray38}{Satellite altitude ($h$) [km]},
xmajorticks=true,
xmin=-0.5, xmax=5.5,
xtick style={color=darkslategray38},
xtick={0,1,2,3,4,5},
xticklabels={300,400,500,600,700,800},
y grid style={lightgray204},
ylabel=\textcolor{darkslategray38}{Average capacity ($\Bar{C}$) [Mbps]},
ymajorgrids,
ymajorticks=true,
ymin=0, ymax=809.33717182893,
ytick style={color=darkslategray38}
]
\draw[draw=white,fill=color1,postaction={
	pattern=north east lines
}] (axis cs:-0.4,0) rectangle (axis cs:-0.2,160.019709633108);

\draw[draw=white,fill=color1,postaction={
	pattern=north east lines
}] (axis cs:0.6,0) rectangle (axis cs:0.8,114.120075783547);
\draw[draw=white,fill=color1,postaction={
	pattern=north east lines
}] (axis cs:1.6,0) rectangle (axis cs:1.8,88.0755828648173);
\draw[draw=white,fill=color1,postaction={
	pattern=north east lines
}] (axis cs:2.6,0) rectangle (axis cs:2.8,68.6663872738909);
\draw[draw=white,fill=color1,postaction={
	pattern=north east lines
}] (axis cs:3.6,0) rectangle (axis cs:3.8,56.5960757475899);
\draw[draw=white,fill=color1,postaction={
	pattern=north east lines
}] (axis cs:4.6,0) rectangle (axis cs:4.8,47.283828055388);
\draw[draw=white,fill=color2,postaction={
	pattern=north east lines
}] (axis cs:-0.2,0) rectangle (axis cs:0,500.049503142749);

\draw[draw=white,fill=color2,postaction={
	pattern=north east lines
}] (axis cs:0.8,0) rectangle (axis cs:1,336.811888761798);
\draw[draw=white,fill=color2,postaction={
	pattern=north east lines
}] (axis cs:1.8,0) rectangle (axis cs:2,312.240834806827);
\draw[draw=white,fill=color2,postaction={
	pattern=north east lines
}] (axis cs:2.8,0) rectangle (axis cs:3,266.910599653066);
\draw[draw=white,fill=color2,postaction={
	pattern=north east lines
}] (axis cs:3.8,0) rectangle (axis cs:4,204.569126541663);
\draw[draw=white,fill=color2,postaction={
	pattern=north east lines
}] (axis cs:4.8,0) rectangle (axis cs:5,181.220349926584);
\draw[draw=white,fill=color1] (axis cs:2.77555756156289e-17,0) rectangle (axis cs:0.2,153.063848751154);

\draw[draw=white,fill=color1] (axis cs:1,0) rectangle (axis cs:1.2,112.153049073238);
\draw[draw=white,fill=color1] (axis cs:2,0) rectangle (axis cs:2.2,81.6691624520778);
\draw[draw=white,fill=color1] (axis cs:3,0) rectangle (axis cs:3.2,62.4464811589893);
\draw[draw=white,fill=color1] (axis cs:4,0) rectangle (axis cs:4.2,52.3669398193515);
\draw[draw=white,fill=color1] (axis cs:5,0) rectangle (axis cs:5.2,42.5087625364674);
\draw[draw=white,fill=color2] (axis cs:0.2,0) rectangle (axis cs:0.4,762.640626402623);

\draw[draw=white,fill=color2] (axis cs:1.2,0) rectangle (axis cs:1.4,667.706136006948);
\draw[draw=white,fill=color2] (axis cs:2.2,0) rectangle (axis cs:2.4,608.163459797076);
\draw[draw=white,fill=color2] (axis cs:3.2,0) rectangle (axis cs:3.4,555.857691034053);
\draw[draw=white,fill=color2] (axis cs:4.2,0) rectangle (axis cs:4.4,519.433748470819);
\draw[draw=white,fill=color2] (axis cs:5.2,0) rectangle (axis cs:5.4,486.270757570341);
\addplot [line width=1.08pt, darkslategray66, forget plot]
table {%
-0.3 157.661382880031
-0.3 162.589738189028
};
\addplot [line width=1.08pt, darkslategray66, forget plot]
table {%
0.7 112.465599747983
0.7 115.896197514279
};
\addplot [line width=1.08pt, darkslategray66, forget plot]
table {%
1.7 86.5072845938344
1.7 89.7402532320078
};
\addplot [line width=1.08pt, darkslategray66, forget plot]
table {%
2.7 67.3272110245876
2.7 69.9143543698518
};
\addplot [line width=1.08pt, darkslategray66, forget plot]
table {%
3.7 55.7059769241822
3.7 57.4756837600762
};
\addplot [line width=1.08pt, darkslategray66, forget plot]
table {%
4.7 46.5036765890836
4.7 48.0777525996271
};
\addplot [line width=1.08pt, darkslategray66, forget plot]
table {%
-0.1 477.732636127407
-0.1 525.640712320457
};
\addplot [line width=1.08pt, darkslategray66, forget plot]
table {%
0.9 320.769063163907
0.9 354.566831850465
};
\addplot [line width=1.08pt, darkslategray66, forget plot]
table {%
1.9 293.933642297828
1.9 329.68220525909
};
\addplot [line width=1.08pt, darkslategray66, forget plot]
table {%
2.9 247.065063743008
2.9 290.739587044148
};
\addplot [line width=1.08pt, darkslategray66, forget plot]
table {%
3.9 193.602526560925
3.9 215.77592816098
};
\addplot [line width=1.08pt, darkslategray66, forget plot]
table {%
4.9 170.897283659832
4.9 193.201353882355
};
\addplot [line width=1.08pt, darkslategray66, forget plot]
table {%
0.1 151.520664654481
0.1 154.625260756579
};
\addplot [line width=1.08pt, darkslategray66, forget plot]
table {%
1.1 110.692677325864
1.1 113.521144370647
};
\addplot [line width=1.08pt, darkslategray66, forget plot]
table {%
2.1 80.6697155531765
2.1 82.6561878228298
};
\addplot [line width=1.08pt, darkslategray66, forget plot]
table {%
3.1 61.5188530547921
3.1 63.3562906724787
};
\addplot [line width=1.08pt, darkslategray66, forget plot]
table {%
4.1 51.521860050708
4.1 53.23287158541
};
\addplot [line width=1.08pt, darkslategray66, forget plot]
table {%
5.1 41.7916216466195
5.1 43.3016086448925
};
\addplot [line width=1.08pt, darkslategray66, forget plot]
table {%
0.3 755.591714035203
0.3 770.797306503742
};
\addplot [line width=1.08pt, darkslategray66, forget plot]
table {%
1.3 656.036850734714
1.3 677.850222022575
};
\addplot [line width=1.08pt, darkslategray66, forget plot]
table {%
2.3 600.715565440132
2.3 615.733797021795
};
\addplot [line width=1.08pt, darkslategray66, forget plot]
table {%
3.3 547.328623012527
3.3 564.290535435848
};
\addplot [line width=1.08pt, darkslategray66, forget plot]
table {%
4.3 510.654264644319
4.3 528.848951920978
};
\addplot [line width=1.08pt, darkslategray66, forget plot]
table {%
5.3 476.591390709458
5.3 495.631467284209
};
\end{axis}

\end{tikzpicture}
	}
    \vskip 0.2cm
	\subfloat[][$h=600$~km.]
	{
		\label{fig:capacity_slot_pattern}
\begin{tikzpicture}

\definecolor{darkslategray38}{RGB}{38,38,38}
\definecolor{darkslategray66}{RGB}{66,66,66}
\definecolor{color2}{RGB}{181,92,95}
\definecolor{lightgray204}{RGB}{204,204,204}
\definecolor{color1}{RGB}{95,157,109}
\definecolor{color2,postaction={
	pattern=north east lines
}}{RGB}{203,136,99}
\definecolor{color1,postaction={
	pattern=north east lines
}}{RGB}{88,116,163}

\definecolor{color2}{RGB}{27, 129, 121}
\definecolor{color1}{RGB}{212, 155, 44}

\begin{axis}[
width = \textwidth/2.1,
height = 5cm,
axis line style={lightgray204},
tick align=outside,
x grid style={lightgray204},
xlabel=\textcolor{darkslategray38}{Slot pattern},
xmajorticks=true,
xmin=-0.5, xmax=3.5,
xtick style={color=darkslategray38},
xtick={0,1,2,3},
xticklabels={DSU,2DSU,4DSU,6DSU},
y grid style={lightgray204},
ylabel=\textcolor{darkslategray38}{Average DL capacity ($\Bar{C}_{\rm DL}$) [Mbps]},
ymajorgrids,
ymajorticks=true,
ymin=0, ymax=610.220888667633,
ytick style={color=darkslategray38},
ytick={0,200,400,600,800},
yticklabels={
  \(\displaystyle {0}\),
  \(\displaystyle {200}\),
  \(\displaystyle {400}\),
  \(\displaystyle {600}\),
  \(\displaystyle {800}\)
}
]
\draw[draw=white,fill=color1,postaction={
	pattern=north east lines
}] (axis cs:-0.4,0) rectangle (axis cs:-0.2,23.6449917406759);

\draw[draw=white,fill=color1,postaction={
	pattern=north east lines
}] (axis cs:0.6,0) rectangle (axis cs:0.8,46.5171819755372);
\draw[draw=white,fill=color1,postaction={
	pattern=north east lines
}] (axis cs:1.6,0) rectangle (axis cs:1.8,91.3764096094603);
\draw[draw=white,fill=color1,postaction={
	pattern=north east lines
}] (axis cs:2.6,0) rectangle (axis cs:2.8,131.138441565818);
\draw[draw=white,fill=color2,postaction={
	pattern=north east lines
}] (axis cs:-0.2,0) rectangle (axis cs:0,90.2687021864286);

\draw[draw=white,fill=color2,postaction={
	pattern=north east lines
}] (axis cs:0.8,0) rectangle (axis cs:1,167.563362724335);
\draw[draw=white,fill=color2,postaction={
	pattern=north east lines
}] (axis cs:1.8,0) rectangle (axis cs:2,310.142469368833);
\draw[draw=white,fill=color2,postaction={
	pattern=north east lines
}] (axis cs:2.8,0) rectangle (axis cs:3,402.661340768263);
\draw[draw=white,fill=color1] (axis cs:2.77555756156289e-17,0) rectangle (axis cs:0.2,21.2559071156417);

\draw[draw=white,fill=color1] (axis cs:1,0) rectangle (axis cs:1.2,41.3042827917211);
\draw[draw=white,fill=color1] (axis cs:2,0) rectangle (axis cs:2.2,78.9031457603919);
\draw[draw=white,fill=color1] (axis cs:3,0) rectangle (axis cs:3.2,113.80702546916);
\draw[draw=white,fill=color2] (axis cs:0.2,0) rectangle (axis cs:0.4,241.260073337324);

\draw[draw=white,fill=color2] (axis cs:1.2,0) rectangle (axis cs:1.4,363.35884922468);
\draw[draw=white,fill=color2] (axis cs:2.2,0) rectangle (axis cs:2.4,544.611763607685);
\draw[draw=white,fill=color2] (axis cs:3.2,0) rectangle (axis cs:3.4,571.550386532349);
\addplot [line width=1.08pt, darkslategray66, forget plot]
table {%
-0.3 23.2661514361012
-0.3 24.0534574477208
};
\addplot [line width=1.08pt, darkslategray66, forget plot]
table {%
0.7 45.7868131797711
0.7 47.222296940286
};
\addplot [line width=1.08pt, darkslategray66, forget plot]
table {%
1.7 89.8667293411493
1.7 92.923879451543
};
\addplot [line width=1.08pt, darkslategray66, forget plot]
table {%
2.7 128.902147471924
2.7 133.399381903072
};
\addplot [line width=1.08pt, darkslategray66, forget plot]
table {%
-0.1 85.1391561657998
-0.1 95.6571184714603
};
\addplot [line width=1.08pt, darkslategray66, forget plot]
table {%
0.9 158.734633362045
0.9 178.242268239787
};
\addplot [line width=1.08pt, darkslategray66, forget plot]
table {%
1.9 295.085405830794
1.9 325.516697283604
};
\addplot [line width=1.08pt, darkslategray66, forget plot]
table {%
2.9 384.905462814903
2.9 421.382936516879
};
\addplot [line width=1.08pt, darkslategray66, forget plot]
table {%
0.1 20.9108604269059
0.1 21.5750284528079
};
\addplot [line width=1.08pt, darkslategray66, forget plot]
table {%
1.1 40.6275096001533
1.1 42.0245193598582
};
\addplot [line width=1.08pt, darkslategray66, forget plot]
table {%
2.1 77.8320145488739
2.1 80.0831823926124
};
\addplot [line width=1.08pt, darkslategray66, forget plot]
table {%
3.1 111.995947788654
3.1 115.704102769142
};
\addplot [line width=1.08pt, darkslategray66, forget plot]
table {%
0.3 237.227492807622
0.3 245.356411168637
};
\addplot [line width=1.08pt, darkslategray66, forget plot]
table {%
1.3 357.275102656015
1.3 369.714579736654
};
\addplot [line width=1.08pt, darkslategray66, forget plot]
table {%
2.3 536.19257599272
2.3 552.5938457467
};
\addplot [line width=1.08pt, darkslategray66, forget plot]
table {%
3.3 562.239397758445
3.3 581.162751112031
};
\end{axis}

\end{tikzpicture}
	}
	\subfloat[][$h=600$~km.]
	{
		\label{fig:tx_slot_pattern}
\begin{tikzpicture}

\definecolor{darkslategray38}{RGB}{38,38,38}
\definecolor{darkslategray66}{RGB}{66,66,66}
\definecolor{color2}{RGB}{181,92,95}
\definecolor{lightgray204}{RGB}{204,204,204}
\definecolor{color1}{RGB}{95,157,109}
\definecolor{}{RGB}{203,136,99}
\definecolor{color1,postaction={
	pattern=north east lines
}}{RGB}{88,116,163}

\definecolor{color2}{RGB}{27, 129, 121}
\definecolor{color1}{RGB}{212, 155, 44}

\begin{axis}[
width = \textwidth/2.1,
height = 5cm,
axis line style={lightgray204},
legend cell align={left},
legend style={fill opacity=0.8, draw opacity=1, text opacity=1, draw=none},
tick align=outside,
x grid style={lightgray204},
xlabel=\textcolor{darkslategray38}{Slot pattern},
xmajorticks=true,
xmin=-0.5, xmax=3.5,
xtick style={color=darkslategray38},
xtick={0,1,2,3},
xticklabels={DSU,2DSU,4DSU,6DSU},
y grid style={lightgray204},
ylabel=\textcolor{darkslategray38}{Average UL capacity ($\Bar{C}_{\rm UL}$) [Mbps]},
ymajorgrids,
ymajorticks=true,
ymin=0, ymax=257.348803896006,
ytick style={color=darkslategray38},
ytick={0,50,100,150,200,250,300},
yticklabels={
  \(\displaystyle {0}\),
  \(\displaystyle {50}\),
  \(\displaystyle {100}\),
  \(\displaystyle {150}\),
  \(\displaystyle {200}\),
  \(\displaystyle {250}\),
  \(\displaystyle {300}\)
}
]
\draw[draw=white,fill=color1,postaction={
	pattern=north east lines
}] (axis cs:-0.4,0) rectangle (axis cs:-0.2,23.3125214697867);


\draw[draw=white,fill=color1,postaction={
	pattern=north east lines
}] (axis cs:0.6,0) rectangle (axis cs:0.8,22.9184416602086);
\draw[draw=white,fill=color1,postaction={
	pattern=north east lines
}] (axis cs:1.6,0) rectangle (axis cs:1.8,22.5041606735326);
\draw[draw=white,fill=color1,postaction={
	pattern=north east lines
}] (axis cs:2.6,0) rectangle (axis cs:2.8,21.5167586179422);
\draw[draw=white,fill=color2,postaction={
	pattern=north east lines
}] (axis cs:-0.2,0) rectangle (axis cs:0,89.9295701096287);

\draw[draw=white,fill=color2,postaction={
	pattern=north east lines
}] (axis cs:0.8,0) rectangle (axis cs:1,83.4415320346072);
\draw[draw=white,fill=color2,postaction={
	pattern=north east lines
}] (axis cs:1.8,0) rectangle (axis cs:2,77.1956756133757);
\draw[draw=white,fill=color2,postaction={
	pattern=north east lines
}] (axis cs:2.8,0) rectangle (axis cs:3,66.770575151683);
\draw[draw=white,fill=color1] (axis cs:2.77555756156289e-17,0) rectangle (axis cs:0.2,21.0115143756204);

\draw[draw=white,fill=color1] (axis cs:1,0) rectangle (axis cs:1.2,20.4117048441931);
\draw[draw=white,fill=color1] (axis cs:2,0) rectangle (axis cs:2.2,19.4851404120188);
\draw[draw=white,fill=color1] (axis cs:3,0) rectangle (axis cs:3.2,18.7276327299676);
\draw[draw=white,fill=color2] (axis cs:0.2,0) rectangle (axis cs:0.4,241.015680597302);

\draw[draw=white,fill=color2] (axis cs:1.2,0) rectangle (axis cs:1.4,181.438988060673);
\draw[draw=white,fill=color2] (axis cs:2.2,0) rectangle (axis cs:2.4,135.912294873842);
\draw[draw=white,fill=color2] (axis cs:3.2,0) rectangle (axis cs:3.4,95.0181929071657);
\addplot [line width=1.08pt, darkslategray66, forget plot]
table {%
-0.3 22.9314155416671
-0.3 23.721723081041
};
\addplot [line width=1.08pt, darkslategray66, forget plot]
table {%
0.7 22.5762178717575
0.7 23.2996593684328
};
\addplot [line width=1.08pt, darkslategray66, forget plot]
table {%
1.7 22.1632675206417
1.7 22.8614572735707
};
\addplot [line width=1.08pt, darkslategray66, forget plot]
table {%
2.7 21.159774492592
2.7 21.8912710621034
};
\addplot [line width=1.08pt, darkslategray66, forget plot]
table {%
-0.1 84.7875744578684
-0.1 95.5225471873788
};
\addplot [line width=1.08pt, darkslategray66, forget plot]
table {%
0.9 78.7516165105326
0.9 87.9070686675198
};
\addplot [line width=1.08pt, darkslategray66, forget plot]
table {%
1.9 73.3612552653697
1.9 81.0948573737811
};
\addplot [line width=1.08pt, darkslategray66, forget plot]
table {%
2.9 63.869691823705
2.9 70.2021153202052
};
\addplot [line width=1.08pt, darkslategray66, forget plot]
table {%
0.1 20.7033459008698
0.1 21.3245504723652
};
\addplot [line width=1.08pt, darkslategray66, forget plot]
table {%
1.1 20.0625213503606
1.1 20.7333902346238
};
\addplot [line width=1.08pt, darkslategray66, forget plot]
table {%
2.1 19.1769912680995
2.1 19.783420248543
};
\addplot [line width=1.08pt, darkslategray66, forget plot]
table {%
3.1 18.4387888216452
3.1 19.033185467714
};
\addplot [line width=1.08pt, darkslategray66, forget plot]
table {%
0.3 236.946885986154
0.3 245.094098948577
};
\addplot [line width=1.08pt, darkslategray66, forget plot]
table {%
1.3 178.21747587413
1.3 184.665296024853
};
\addplot [line width=1.08pt, darkslategray66, forget plot]
table {%
2.3 133.794117395051
2.3 137.897488390495
};
\addplot [line width=1.08pt, darkslategray66, forget plot]
table {%
3.3 93.6195666977117
3.3 96.5409416119229
};
\end{axis}

\end{tikzpicture}
	}
	\caption{SNR ($\gamma$) and throughput ($\Bar{C}$) of the selected \glspl{ue} vs. $h$ and the slot pattern.}
	\label{fig:sched}
\end{figure*}

\begin{figure*}[t!]
	\begin{subfigure}[b]{\linewidth}
		\centering
		\includegraphics[width=0.99\columnwidth]{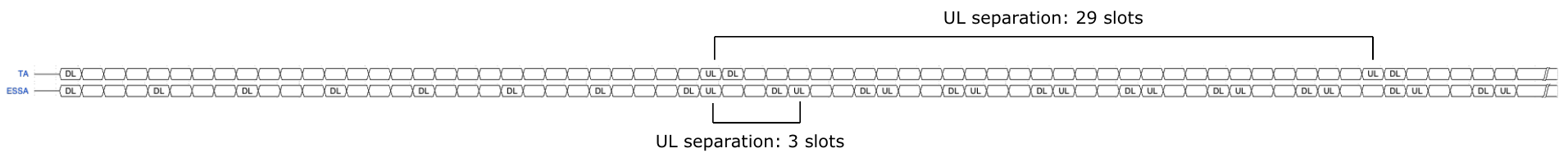}
  \caption{DSU slot pattern (where only one slot is allocated per DL transmission).}
	\end{subfigure}
 \begin{subfigure}[b]{\linewidth}
		\centering
		\includegraphics[width=0.99\columnwidth]{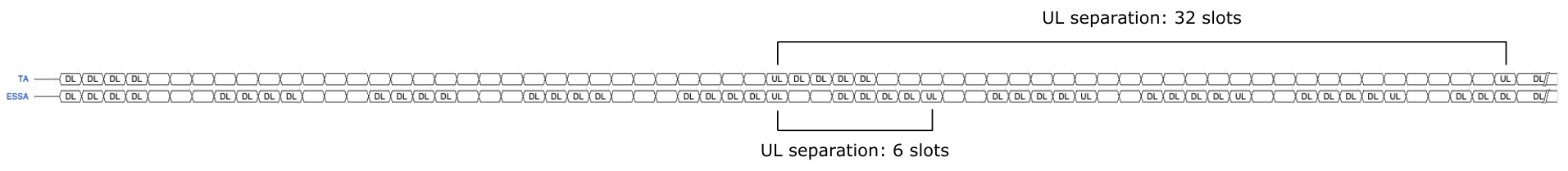}
  \caption{4DSU slot pattern (where four slots are scheduled
for each DL transmission).}
	\end{subfigure}
 \caption{Example of a slot pattern for a direct-to-satellite connection at $h=500$ km and $\alpha_{min}=80^\circ$. We compare TA vs. ESSA.\vspace{-0.5cm}}
 \label{fig:dsu}
\end{figure*}

\subsubsection*{Satellite altitude}
First, we explore the impact of $h$.
In \cref{fig:snr} we observe that the path loss increases with $h$, so the SNR decreases as $h$ increases.
For instance, at $h=300$~km, the median \gls{snr} with MG scheduling is approximately 29~dB, vs. 23~dB with MS scheduling.
This gap is because MS is based on the propagation delay alone, with no considerations related to the channel quality.

In~\cref{fig:capacity} we plot the average capacity of the selected $N_s$ UEs via MS or MG scheduling vs. $h$, considering both TA and ESSA for slot allocation.
We see that MS-ESSA outperforms MG-ESSA. 
In fact, MS is specifically optimized for ESSA. 
In this scenario, additional DL slots can be allocated if $2\tau_m \geq t_{\rm UL}$: MS prioritizes UEs with the lowest differential propagation delays based on~\cref{eq:S_MS} in an attempt to reduce $\tau_m$ and schedule more transmissions, which permits to improve the network capacity. 
At the same time, MG-TA outperforms MS-TA since MG is explicitly designed to maximize the number of bits per slot and so the capacity.
Between the two best configurations, {MS-ESSA} achieves more than three times the capacity of {MG-TA}, despite the lower \gls{snr} of MS compared to MG.
This is because MG-TA is affected by the limitations of TA in terms of overhead and channel usage. In turn, MS-ESSA can scheduled additional slots during guard periods, even if fewer bits per slot are transmitted, which eventually improves resource utilization.

\subsubsection*{Slot pattern}
We now analyze the effect of the slot pattern.
In particular, we consider the case in which a DL transmission requires more than one DL slot to be completed, e.g., to satisfy more aggressive traffic requests. The notation XDSU is used to indicate that X consecutive slots are consumed for each DL transmission.
As expected, increasing the number of slots dedicated to DL transmissions improves the DL capacity. 
Considering {MG-TA}, in~\cref{fig:capacity_slot_pattern} we see that $\Bar{C}_{\rm DL}$ increases from approximately 25~Mbps with DSU (where only one slot is allocated per \gls{dl} transmission) to nearly 130~Mbps with 6DSU (where six slots are scheduled for each \gls{dl} transmission).

These results also confirm that {MS-ESSA} outperforms {MG-TA}. 
As shown in~\cref{fig:capacity_slot_pattern}, the average \gls{dl} capacity for TA is up to 130~Mbps with 6DSU, vs. around 570~Mbps for ESSA in the same configuration, i.e., more than 4 times higher.
However, in {MS-ESSA} the use of more \gls{dl} slots per transmission comes at a significant cost in terms of \gls{ul} capacity.
In~\cref{fig:tx_slot_pattern} we see that, while in MG-TA $\Bar{C}_{\rm UL}$ is almost constant as the number of DL slots increases, in MS-ESSA it drops from around 240~Mbps with DSU to only 95~Mbps with 6DSU.
In fact, in MG-TA the frame structure is sparse due to guard periods, so new DL slots can be allocated by simply shifting UL slots accordingly.
In contrast, in MS-ESSA transmissions are scheduled also during guard periods, provided that the condition in Lemma 1 is satisfied, leaving limited flexibility for allocating new DL slots, at the expense of UL slots.
In~\cref{fig:dsu} we compare DSU vs. 4DSU for a direct-to-satellite connection at $h=500$ km and $\alpha_{min}=80^\circ$. In this example, the guard period is constant and equal to 29 slots in TA, so UL slots are separated by $29$ slots in DSU, vs. $29+(4-1)=32$ in 4DSU. The UL overhead therefore increases by as little as 10\% in 4DSU.
In ESSA, instead, UL slots are separated by $3$ slots in DSU, vs. $6$ in 4DSU, so the UL overhead is around 50\%. 

It is clear that ESSA introduces some bias in the scheduling of DL and UL transmissions, where the latter are sacrificed in favor of the former.
As part of our future work, we will improve the design of ESSA to incorporate additional fairness in the allocation of UL and DL slots.

\section{Conclusions and Future Works}
\label{sec:conclusions} 
The design of advanced \gls{tdd} schemes can facilitate the integration of \glspl{tn} and \glspl{ntn}, ultimately enabling seamless switching between satellite and terrestrial connectivity in uncovered areas or in case of emergency.
In this work we proposed an enhanced slot allocation mechanism for \gls{tdd} \gls{ntn} called ESSA. The goal is to mitigate the impact of the long propagation delays experienced in satellite networks by allocating additional transmission slots that do not cause interference during guard periods. Moreover, we proposed two scheduling methods to select the optimal UEs that should transmit in TDD to maximize the network capacity.
We showed that ESSA can reduce the overhead in DL compared to a baseline solution that simply implements TA to avoid interference, despite some degradation of the UL~capacity.

In this paper, the assumption was to design NTNs that operate in TDD like TNs, as promoted in 3GPP NTN Release 17. In our future research, we will also focus on the seamless integration between NTN FDD and TN TDD systems, so as to reduce the complexity of NTN transmissions.

\section*{Acknowledgments}
This work was supported by the European Commission through the European Union's Horizon Europe Research and Innovation Programme under the Marie Skłodowska-Curie-SE, Grant Agreement No. 101129618, UNITE.
This research was also partially supported by the European Union under the Italian National Recovery and Resilience Plan (NRRP) of NextGenerationEU, 
partnership on ``Telecommunications of the Future'' (PE0000001 - program “RESTART”).

\vspace{1cm}
\bibliographystyle{IEEEtran}
\bibliography{bibliography.bib}

\end{document}